\long\def\@makecaption#1#2{\ifx\@captype\@IEEEtablestring%
\footnotesize\begin{center}{\normalfont\footnotesize #1}\\
{\normalfont\footnotesize\scshape #2}\end{center}%
\@IEEEtablecaptionsepspace
\else
\@IEEEfigurecaptionsepspace
\setbox\@tempboxa\hbox{\normalfont\footnotesize {#1.}~~ #2}%
\ifdim \wd\@tempboxa >\hsize%
\setbox\@tempboxa\hbox{\normalfont\footnotesize {#1.}~~ }%
\parbox[t]{\hsize}{\normalfont\footnotesize \noindent\unhbox\@tempboxa#2}%
\else
\hbox to\hsize{\normalfont\footnotesize\hfil\box\@tempboxa\hfil}\fi\fi}
\newcommand{\E}{\mathbb{E}}
\newcommand{\R}{\mathbb{R}}
\newcommand{\ones}{\mathbf{1}}
\newtheorem{theorem}{Theorem}
\newtheorem{corollary}{Corollary}
\newtheorem{proposition}{Proposition}
\newtheorem{lemma}{Lemma}
\newcommand{\mM}{\mathbf{M}}
\newcommand{\mA}{\mathbf{A}}
\newcommand{\mB}{\mathbf{B}}
\newcommand{\mC}{\mathbf{C}}
\newcommand{\0}{\mathbf{0}}
\newcommand{\vx}{\mathbf{x}}
\newcommand{\vy}{\mathbf{y}}
\newcommand{\vecv}{\mathbf{v}}
\newcommand{\cG}{\mathcal{G}}
\newcommand{\cN}{\mathcal{N}}
\newcommand{\cE}{\mathcal{E}}
\newcommand{\indicator}{\mathds{1}}
\title{\LARGE \bf
Bi-SIS Epidemics on Graphs - Quantitative Analysis of Coexistence Equilibria
}
\author{Vishwaraj Doshi$^*$, Jie Hu$^*$, and Do Young Eun
\thanks{
$^*$Equal contributors.}
\thanks{Vishwaraj Doshi is with the Data Science and Advanced Analytics team at IQVIA. The research was conducted while he was with the Operations Research Graduate Program, North Carolina State University, Raleigh, NC 27606 (NCSU). Jie Hu and Do Young Eun are with the Department of Electrical and Computer Engineering, NCSU. Email: vishwaraj.doshi@iqvia.com, \{jhu29, dyeun\}@ncsu.edu. This work was supported in part by National Science Foundation under Grant Nos. CNS-2007423, IIS-1910749, and CNS-1824518.}
}
\begin{document}

\maketitle
\thispagestyle{empty}
\pagestyle{empty}

\begin{abstract}
    We consider a system in which two viruses of the \textit{Susceptible-Infected-Susceptible} (SIS) type compete over general, overlaid graphs. While such systems have been the focus of many recent works, they have mostly been studied in the sense of convergence analysis, with no existing results quantifying the non-trivial coexistence equilibria (CE) - that is, when both competing viruses maintain long term presence over the network. In this paper, we prove monotonicity of the CE with respect to effective infection rates of the two viruses, and provide the first quantitative analysis of such equilibria in the form of upper bounds involving spectral radii of the underlying graphs, as well as positive equilibria of related single-virus systems. Our results provide deeper insight into how the long term infection probabilities are affected by system parameters, which we further highlight via numerical results.
\end{abstract}
\section{Introduction}\label{section:intro}

The study of multiple competing viruses over graph topologies has gained considerable traction in recent years \cite{Mieghem2013,R5,R6}. This is mainly because of their versatility in modeling not just infectious diseases, but also phenomena such as opposing views and opinions \cite{ruf2017dynamics} and competing products \cite{apt2011diffusion}. These phenomena, which we will commonly refer to as \textit{epidemics} or \textit{viruses}, spread over topologies such as social networks and other media platforms, word of mouth, or even human contact - often modelled as graphs with edges representing the way we connect with one another. 

Due to the relative ease of analysis, the \textit{bi-virus} model of competition between \textit{two} epidemics has seen more profound analysis \cite{prakash2012winner,liu2016analysis,R2,sahneh2014competitive,yang2017bi,Santos2015}, the underlying viruses typically being of the \textit{Susceptible-Infected-Susceptible} (SIS) type. The original (single virus) SIS model on graph was introduced to model the spread of Gonorrhea in \cite{Yorke1976}, which also provided the complete convergence characterization. Two outcomes were shown to be possible - either the virus persists over the network in the long run, when the \textit{effective infection rate} $\tau\!>\!0$ is larger than a certain threshold value $\tau^*\!>\!0$, or the virus dies out and the system converges to a healthy state when $\tau^* \!\leq\! 0$.\footnote{The effective infection rate $\tau \!\triangleq\! \beta/\delta$, where $\beta\!>\!0$ is the infection rate of the virus and $\delta\!>\!0$ stands for the recovery rate from the virus, captures the overall \textit{strength} of a virus.}



This threshold conditions for the single-virus SIS model were also independently rediscovered \cite{castellano2010thresholds, mieghem2013inhomogeneous}, with follow-up works \cite{van2009virus, mieghem2012comcom} being successful in establishing quantitative bounds on the long run infection probabilities/market share/influence in the case when the virus/product/opinion persists, even showing convexity of the average infection probabilities in $1/\tau$ in some cases \cite{mieghem2013inhomogeneous}. The convergence of the system itself has been proved multiple times in the literature \cite{van2011n,gray2011stochastic,li2012susceptible} utilizing techniques other than the original Lyapunov based analysis in \cite{Yorke1976}. One such convergence proof \cite{Krause_trichotomy} relies on showing that the SIS epidemic model is a \textit{monotone dynamical system} (MDS); using proof techniques that leverage the convergence properties of monotone sequences in compact sets to extract the threshold criterion.

Recently, MDS techniques were used to establish, for the first time, the complete convergence criterion for the \textit{bi-SIS} model - involving two viruses of the SIS type competing on general, overlaid graphs \cite{vdoshi2021}; providing threshold type conditions under which both viruses (which we refer to as Virus 1 and 2) die out, or one prevails over the other. More interestingly, they were used to establish necessary and sufficient conditions for the existence and global convergence of the system to the set of \textit{coexistence equilibria} (CE), where both viruses maintain presence over the network in the long run - previously an open problem \cite{yang2017bi}. Recent works \cite{R2} also improved the qualitative understanding of the CE by showing that they are always disjoint and finitely many, except for some pathological examples.\footnote{When the system parameters lie in an algebraic set of measure zero.} However, apart from a few results which are simply by-products of the techniques utilized for the convergence proofs in \cite{vdoshi2021, R2}, there is a lack of quantitative bounds on CE, and little understanding of their monotonicity properties with respect to the system parameters. 


 
In this paper, we provide quantitative results characterizing the behaviour of CE of the bi-SIS model on general graphs with respect to effective infection rates $\tau_1,\tau_2$ of the two competing viruses. Building upon crucial observations obtained via fixed point analysis of the bi-virus system in the MDS framework, we provide new results on the relationship between the long run probability of being infected by Virus 1 versus that of Virus 2, with regards to change in system parameters $\tau_1,\tau_2$. These results are sharper than those emerging out of mere convergence analysis, and enable us to further quantify the connection between the CE and the positive equilibrium of corresponding single-SIS models, as well as the spectral radius of the underlying graphs in the form of various upper bounds. We also briefly show via numerical results that the upper bounds are successful in capturing the trend in which the CE fixed points change with the system parameters. Our results provide a deeper understanding of how the increase (decrease) in strength of one virus affects the presence of its competitor over the network, showing that the expected decrease (increase) can be more drastic than one would expect.


The rest of the papers is organized as follows. In Section \ref{section:preliminaries}, we give succinct overview on bi-SIS model with a summary of existing convergence results. Section \ref{section:single-sis results} contains our main results of the paper, with the proofs deferred to the Appendices. We then provide brief numerical results in Section \ref{section:numerical}, followed by the conclusion.


\section{Bi-SIS Epidemic Model - A Primer}\label{section:preliminaries}

\subsection{Basic Notations}
We use lower case, bold-faced letters to denote column vectors $\vecv\in\R^N$, and upper case, bold-faced letters to denote square matrices $\mM\in\R^{N\times N}$. We denote by $\lambda(\mM)$ the spectral radius of a non-negative matrix $\mM$. We use $\text{diag}(\vecv)$ to denote the $N\times N$ diagonal matrix with entries of vector $\vecv\in\R^N$ on the main diagonal, and $\ones$/$\0$ for all one/zero vectors with appropriate dimensions. We write $[\vx]_i$ or normal letter $x_i$ with index $i$ to represent the $i$-th entry of vector $\vx$. For vectors, $\vx\leq \vy$ means $x_i\leq y_i$ for all $i$; $\vx < \vy$ if $\vx\leq \vy$ and $\vx\neq\vy$; $\vx \ll \vy$ if $x_i < y_i$ for all $i$. Let $\cG(\cN,\cE)$ denote a general, undirected and connected graph with its adjacency matrix $\mA=[a_{ij}]$, where $a_{ij} = \indicator_{(i,j)\in\cE}$ for any $i,j\in\cN$.

\subsection{The Bi-SIS Model}\label{prelim_bi-sis}
We consider the spread of Virus 1 and 2 on overlaid graphs $\cG_1(\cN,\cE_1)$ and $\cG_2(\cN,\cE_2)$ respectively, sharing the same set of nodes $\cN$, but different edge sets $\cE_1$ and $\cE_2$ through which the respective epidemics propagate.\footnote{Using overlaid graphs with different edge sets $\cE_1$ and $\cE_2$ model the different media through which epidemics, opinions, malware and other such phenomena propagate.} At any given time, a node $i \in \cN$ is either \textit{susceptible}, or is \textit{infected by either Virus 1 or Virus 2}. If infected by Virus 1, the node infects each its susceptible neighbors with rate $\beta_1\!>\!0$, where neighbors are determined with respect to the edge set $\cE_1$ of the graph $\cG_1(\cN,\cE_1)$. Virus 2 is transmitted similarly with rate $\beta_2\!>\!0$ through the edge set $\cE_2$. Also, infected nodes recover with rates $\delta_1,\delta_2\!>\!0$ depending on whether they are infected by Virus 1 or 2 respectively. We call $\tau_1 \!\triangleq\! \beta_1/\delta_1$ and $\tau_2 \!\triangleq\! \beta_2/\delta_2$ as the effective infection rates of two corresponding viruses. The system dynamics are described by the following set of ordinary differential equations (ODEs):
\begin{equation}\label{eqn:ode_bi-sis_i}
    \begin{split}
        \dot x_i(t) &= \beta_1(1-x_i(t)-y_i(t))\sum_{j\in\cN} a_{ij}x_j(t) - \delta_1 x_i(t), \\
        \dot y_i(t) &= \beta_2(1-x_i(t)-y_i(t))\sum_{j\in\cN} b_{ij}y_j(t) - \delta_2 y_i(t)
    \end{split}
\end{equation}
for all $i \in \cN$, where $x_i(t), y_i(t)\in[0,1]$ are the probabilities that node $i\in\cN$ is infected by Virus 1 or 2 respectively at any time $t \geq 0$. Note that $x_i(t)+y_i(t) \in [0,1]$ at all time. In a matrix-vector form, \eqref{eqn:ode_bi-sis_i} can be written as
\begin{equation}\label{eqn:ode_bi-sis}
    \begin{split}
        \dot \vx &= \beta_1\text{diag}\left(\ones-\vx - \vy \right)\mA\vx - \delta_1\vx, \\
        \dot \vy &= \beta_2\text{diag}\left(\ones-\vx - \vy\right)\mB\vy - \delta_2\vy,
    \end{split}
\end{equation}
where $\mA=[a_{ij}]$ and $\mB = [b_{ij}]$ are the adjacency matrices of the overlaid graphs $\cG_1(\cN,\cE_1)$ and $\cG_2(\cN,\cE_2)$, respectively. We denote by $E\subset[0,1]^{2N}$ the set of all possible equilibria of system \eqref{eqn:ode_bi-sis}, which trivially contains $(\0,\0)$.

The \textit{single-SIS} dynamics for Virus 1 can be obtained by setting $\vy=0$ in \eqref{eqn:ode_bi-sis}, and is given by
\begin{equation}\label{eqn:ode_sis}
    \dot \vx = \beta_1\text{diag}\left(\ones-\vx \right)\mA\vx - \delta_1\vx.
\end{equation}
When $\tau_1\!>\!\tau_1^*= 1/\lambda(\mA)$, any trajectory of the system starting from $[0,1]^N \setminus \{\0\}$ converges to a positive equilibrium $\vx^*\gg\0$, otherwise they converge to $\0$ \cite{Yorke1976}. Similarly, single-SIS model for Virus 2 can be obtained by substituting $(\beta_2,\delta_2,\mB)$ for $(\beta_1,\delta_1,\mA)$ in \eqref{eqn:ode_sis}, with its positive equilibrium $\vy^*\gg\0$ when $\tau_2\!>\!\tau_2^*= 1/\lambda(\mB)$.

A preliminary result for bi-SIS epidemics \cite{yang2017bi} is that any virus which fails to satisfy its respective single-SIS survival threshold will die out in the long run; that is, Virus 1 (Virus 2) will die out irrespective of the presence of its competing virus if $\tau_1 \!\leq\!1/\lambda(\mA)$ ($\tau_2 \!\leq\!1/\lambda(\mB)$). If, at any given time $s\geq0$, a trajectory of \eqref{eqn:ode_bi-sis} enters the sets $[0,1]^N \times \{ \0 \}$ (Virus 2 dies out) or $ \{ \0 \} \times [0,1]^N$ (Virus 1 dies out), it remains in that set for all times $t>s$, and the bi-SIS model effectively reduces to a single-SIS model corresponding to the surviving virus, whose long run behaviour is governed by the single-virus convergence criterion as outlined earlier.

The non-trivial case arises when both $\tau_1\!>\!1/\lambda(\mA)$ and $\tau_2\!>\!1/\lambda(\mB)$, for which the techniques used to derive the single-SIS convergence criterion no longer apply. Specifically, both positive equilibria of the related single-virus systems $\vx^*,\vy^* \gg \0$ may exist, and it is only under this scenario when the system can possibly converge to one of (finitely) many \textit{coexistence equilibria} of the kind $(\hat \vx, \hat \vy) \gg (\0,\0)$. The complete convergence criterion derived in \cite{vdoshi2021} does include the case when $\tau_1\!>\!1/\lambda(\mA)$ and $\tau_2\!>\!1/\lambda(\mB)$, and gives the following additional conditions on $\tau_1,\tau_2$ and the respective outcomes:

\begin{itemize}
    \item[(C1)] If $\tau_1\lambda(\text{diag}(\ones\!-\!\vy^*)\mA)>1$ and $\tau_2\lambda(\text{diag}(\ones\!-\!\vx^*)\mB)\leq1$, the bi-SIS system \eqref{eqn:ode_bi-sis} converges to $(\vx^*,\0)$;
    \item[(C2)] If $\tau_1\lambda(\text{diag}(\ones\!-\!\vy^*)\mA)\leq1$ and $\tau_2\lambda(\text{diag}(\ones\!-\!\vx^*)\mB)>1$, the bi-SIS system \eqref{eqn:ode_bi-sis} converges to $(\0,\vy^*)$;
    \item[(C3)] If $\tau_1\lambda(\text{diag}(\ones\!-\!\vy^*)\mA)>1$ and $\tau_2\lambda(\text{diag}(\ones\!-\!\vx^*)\mB)>1$, the bi-SIS system \eqref{eqn:ode_bi-sis} converges to one CE fixed point $(\hat \vx, \hat \vy)\gg (\0,\0)$ in the equilibria set $E$;
\end{itemize}
Note that in (C1)--(C3), $\vx^*$ and $\vy^*$ are the single-virus fixed points as defined earlier in the subsection.

Since our focus is quantitative characterization of CE fixed points, in the rest of this paper, we will assume that $\tau_1,\tau_2$ always satisfy $\tau_1\!>\!1/\lambda(\mA)$ and $\tau_2\!>\!1/\lambda(\mB)$ and condition (C3), unless mentioned otherwise.

\section{Quantitative analysis of the Bi-SIS model}\label{section:single-sis results}

Before presenting our results for the bi-SIS case, we give a bound on the positive equilibria for single-virus SIS models.

\begin{proposition}\label{prop:sis_upperbound}
Consider the single-virus SIS system \eqref{eqn:ode_sis}, and let $\tau_1 \!>\! 1/\lambda(\mA)$ with $\vx^* \!\gg\! \0$ being the corresponding positive, globally attractive equilibrium. Then, we have
\begin{equation}\label{eqn:fixed_point_upperbound}
    \frac{\ones^T\vx^*}{N} \leq 1-\frac{1}{\tau_1\lambda(\mA)} \leq x^*_{\max} \triangleq \max_{i \in \cN} x_i^*.
\end{equation}
\end{proposition}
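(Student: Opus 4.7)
The plan is to handle the two inequalities separately. The right-hand bound $x^*_{\max} \geq 1 - 1/(\tau_1\lambda(\mA))$ will follow from a short Perron--Frobenius argument, while the upper bound on the uniform average is more delicate and will require combining Rayleigh's inequality with a Jensen's step weighted by $\vx^*$ itself, followed by Cauchy--Schwarz.

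For the lower bound on $x^*_{\max}$, I would first rewrite the equilibrium condition $\beta_1 \text{diag}(\ones-\vx^*)\mA\vx^* = \delta_1\vx^*$ componentwise as $(\mA\vx^*)_i = x_i^*/[\tau_1(1-x_i^*)]$. Since $1 - x_i^* \geq 1 - x^*_{\max}$ for every $i$, this yields the entrywise domination $\mA\vx^* \leq \frac{1}{\tau_1(1-x^*_{\max})}\vx^*$ with $\vx^* \gg \0$. The Collatz--Wielandt characterization of the Perron root (or, equivalently, taking the inner product with the Perron eigenvector of $\mA$) then gives $\lambda(\mA) \leq 1/[\tau_1(1-x^*_{\max})]$, which rearranges to the stated bound.

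For the upper bound on $\ones^T\vx^*/N$, I would first contract the equilibrium condition with $\vx^*$ to obtain the key identity $\tau_1(\vx^*)^T\mA\vx^* = \sum_i (x_i^*)^2/(1-x_i^*)$. By Rayleigh's inequality for symmetric $\mA$, the LHS is at most $\tau_1\lambda(\mA)\|\vx^*\|_2^2$. For the RHS, I would apply Jensen's inequality to the convex function $g(x) = x/(1-x)$ with the probability weights $w_i = x_i^*/\|\vx^*\|_1$, which lower-bounds the sum by $\|\vx^*\|_1\,g(\|\vx^*\|_2^2/\|\vx^*\|_1) = \|\vx^*\|_1\|\vx^*\|_2^2/(\|\vx^*\|_1 - \|\vx^*\|_2^2)$. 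Chaining the two bounds and cancelling $\|\vx^*\|_2^2$ yields $\|\vx^*\|_2^2/\|\vx^*\|_1 \leq 1 - 1/(\tau_1\lambda(\mA))$, and the standard inequality $\|\vx^*\|_1^2 \leq N\|\vx^*\|_2^2$ then upgrades this to $\ones^T\vx^*/N \leq 1 - 1/(\tau_1\lambda(\mA))$. The main obstacle here is identifying the correct weighting in Jensen's step: a uniform weighting forces the auxiliary claim $\vd^T\vx^* \leq \lambda(\mA)\ones^T\vx^*$ (with $\vd$ the degree vector), which is not transparent, and a Perron-eigenvector weighting only recovers the weaker Perron-weighted average bound. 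The $\vx^*$-weighting is the one that exactly matches the quadratic form $(\vx^*)^T\mA\vx^*$ produced by Rayleigh, and once that insight is in hand the remaining algebra is routine.
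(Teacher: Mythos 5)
Your proposal is correct. The second inequality ($x^*_{\max} \geq 1 - 1/\tau_1\lambda(\mA)$) is proved exactly as in the paper: componentwise use of the fixed-point equation plus the Collatz--Wielandt characterization of $\lambda(\mA)$. For the first inequality, both you and the paper start from the same two ingredients --- the Rayleigh bound $(\vx^*)^T\mA\vx^* \leq \lambda(\mA)\|\vx^*\|_2^2$ and the contraction $\tau_1(\vx^*)^T\mA\vx^* = \sum_i (x_i^*)^2/(1-x_i^*)$ --- but then diverge. The paper interprets the resulting ratio probabilistically (a uniform random variable $Y$ over the entries of $\vx^*$), applies the FKG/Chebyshev correlation inequality to factor $\E[Y^2/(1-Y)] \geq \E[Y^2]\,\E[1/(1-Y)]$, and finishes with Jensen on the convex map $y \mapsto 1/(1-y)$ under the \emph{uniform} distribution. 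You instead apply Jensen to $x \mapsto x/(1-x)$ under the $\vx^*$-proportional weights $w_i = x_i^*/\|\vx^*\|_1$ and then use Cauchy--Schwarz ($\|\vx^*\|_1^2 \leq N\|\vx^*\|_2^2$) to pass from the weighted mean back to the uniform mean. Both are valid; your route avoids the correlation inequality entirely and, as a bonus, establishes the strictly stronger intermediate statement $\|\vx^*\|_2^2/\|\vx^*\|_1 \leq 1 - 1/\tau_1\lambda(\mA)$, i.e.\ the bound holds for the self-weighted average of infection probabilities, which dominates the uniform average. The paper's route keeps the uniform empirical distribution throughout, which makes the two halves of \eqref{eqn:fixed_point_upperbound} read as statements about the same random variable $Y$ and dovetails with the reuse of inequality \eqref{eqn:lower_bound_expect_form} later in the proof of Proposition \ref{cor:upperbound_equilibrium}.
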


This upper bound on the average infection probability $\ones^T\vx^*/N$ in \eqref{eqn:fixed_point_upperbound} has also been alluded to in \cite{mieghem2012comcom} as emerging out of the convexity of $\vx^*$ in $1/\tau_1$ \cite{mieghem2013inhomogeneous}. Here, we present a formal statement for the bound in the form of Proposition \ref{prop:sis_upperbound}, providing a more direct proof using the Fortuin–Kasteleyn–Ginibre (FKG) and Jensen's inequalities in Appendix \ref{appendix}, without the need of first showing convexity via lengthy computations. Our approach also allows us to provide the lower bound on the largest entry of $\vx^*$, the second inequality in \eqref{eqn:fixed_point_upperbound}.

For regular graphs (with degree $d$ for every node), both the inequalities in \eqref{eqn:fixed_point_upperbound} become equality since we know from Lemma 7 \cite{van2009virus} that $\vx_i^* = 1 \!-\! 1/\tau_1 d$ for each $i \in \cN$, and $\lambda(\mA) = d$. When $\tau_1$ is only slightly larger than the threshold $1/\lambda(\mA)$, intuitively speaking, the virus should not infect a large portion of the network, since it is barely strong enough to survive. The first inequality in \eqref{eqn:fixed_point_upperbound} confirms this intuition since $\ones^T\vx^*/N$ is still close to zero for such $\tau_1$, implying that the virus barely survives in the long run.
If $\tau_1$ is very large, or $1/\tau_1\lambda(\mA) \to 0$, the upper bound of $\ones^T\vx^*/N$ in \eqref{eqn:fixed_point_upperbound} gets closer to $1$ and doesn't tell much information about $\vx^*$. From \eqref{eqn:fixed_point_upperbound}, however, the node with largest infection probability has $x^*_{\max} \to 1$, showing that the virus has at least infected the `weakest' node in the network that is susceptible to infection.

We now provide quantitative results for the bi-SIS CE fixed points. A CE fixed point $(\hat \vx, \hat \vy) \gg (\0,\0)$ of system \eqref{eqn:ode_bi-sis} satisfies the following equations for each $i \in \cN$:
\begin{equation}\label{eqn:bi_SIS_fixed_eq1}
    \sum_{j\in\cN}\!\! a_{ij}\hat{x}_j \!=\! \frac{\hat{x}_i}{\tau_1 (1\!-\!\hat{x}_i\!-\!\hat{y}_i)}, ~\sum_{j\in\cN}\!\! b_{ij}\hat{y}_j \!=\! \frac{\hat{y}_i}{\tau_2 (1\!-\!\hat{x}_i\!-\!\hat{y}_i)}.
\end{equation}
Analyzing these equations by first trying to show the convexity of CE in the system parameters, as done in \cite{mieghem2013inhomogeneous} for the single-virus SIS model, would be infeasible. This is because the second-order derivatives of the bi-SIS model quickly become intractable due to the highly coupled nature of the ODE system and its fixed point equations, as seen in \eqref{eqn:ode_bi-sis} and \eqref{eqn:bi_SIS_fixed_eq1} respectively. Instead, our approach is to first leverage the underlying monotonicity properties of the bi-SIS system. Apart from the bi-virus ODE system \eqref{eqn:ode_bi-sis} being MDS \cite{vdoshi2021}, i.e., the trajectories of \eqref{eqn:ode_bi-sis} preserving the ordering of the initial points, we show in the following lemma that the CE (which is the limiting state of the system) also exhibits strong monotonicity with respect to the effective infection rates $\tau_1$ and $\tau_2$.


\begin{lemma}\label{lemma:x_y_tau_relationship}
Let $(\hat \vx,\hat \vy) \!\!\gg\!\! (\0,\0)$ be a CE of the bi-SIS ODE \eqref{eqn:ode_bi-sis}. For all $i \!\in\! \cN$, entries $\hat x_i$ of $\hat \vx$ increase in $\tau_1$ (decrease in $\tau_2$), while entries $\hat y_i$ of $\hat \vy$ decrease in $\tau_1$ (increase in $\tau_2$). That is,
\begin{equation*}
    \frac{\partial \hat x_i}{\partial \tau_1} \!>\! 0, ~\frac{\partial \hat y_i}{\partial \tau_1} \!<\! 0, ~~\text{and}~~~
    \frac{\partial \hat x_i}{\partial \tau_2} \!<\! 0, ~\frac{\partial \hat y_i}{\partial \tau_2} \!>\! 0.~~~~\qed
\end{equation*}
\end{lemma}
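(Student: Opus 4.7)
The plan is to combine the competitive monotone-dynamical-system (MDS) structure of the bi-SIS ODE, already exploited in \cite{vdoshi2021}, with the implicit function theorem (IFT) applied to the fixed-point equations \eqref{eqn:bi_SIS_fixed_eq1}. By the symmetry of the roles played by $(\tau_1, \mA, \hat{\vx})$ and $(\tau_2, \mB, \hat{\vy})$ in \eqref{eqn:ode_bi-sis}, it is enough to establish $\partial \hat{x}_i/\partial \tau_1 > 0$ and $\partial \hat{y}_i/\partial \tau_1 < 0$ for all $i$; the $\tau_2$-claims follow by relabeling. Throughout I would fix $\delta_1$ and vary $\beta_1 = \tau_1\delta_1$.

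My first step is to set up the local framework via IFT. Writing \eqref{eqn:bi_SIS_fixed_eq1} as $F(\vx, \vy; \tau_1, \tau_2) = \0$, the CE produced under (C3) is locally isolated (cf.\ \cite{R2}) and, being locally attracting, has a Hurwitz linearization, so $\partial F/\partial(\vx,\vy)$ is nonsingular at $(\hat{\vx}, \hat{\vy})$. IFT then gives a smooth local branch $\tau_1 \mapsto (\hat{\vx}(\tau_1), \hat{\vy}(\tau_1))$, so the partial derivatives in question are well-defined.

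For their signs I would run a comparison argument using the type-K (competitive) order $(\vx, \vy) \leq_K (\vx', \vy') \iff \vx \leq \vx' \text{ and } \vy \geq \vy'$, under which the bi-SIS flow is monotone. Denoting by $f_{\tau_1,\tau_2}$ the right-hand side of \eqref{eqn:ode_bi-sis}, the difference $f_{\tau_1',\tau_2}(\vx,\vy) - f_{\tau_1,\tau_2}(\vx,\vy)$ for $\tau_1' > \tau_1$ equals $\delta_1(\tau_1'-\tau_1)\,\text{diag}(\ones-\vx-\vy)\mA\vx$ in the $\vx$-block and $\0$ in the $\vy$-block, hence is $\geq_K \0$ on the admissible state space. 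Running the $\tau_1'$-flow from the $\tau_1$-CE $(\hat{\vx}, \hat{\vy})$ therefore starts with a $\leq_K$-positive velocity, so by the standard MDS comparison principle the trajectory remains $\geq_K (\hat{\vx}, \hat{\vy})$; invoking (C3) to extract a CE limit and identifying it with the IFT branch for small $\tau_1' - \tau_1$ yields $\hat{\vx}(\tau_1') \geq \hat{\vx}(\tau_1)$ and $\hat{\vy}(\tau_1') \leq \hat{\vy}(\tau_1)$. Dividing by $\tau_1' - \tau_1$ and sending $\tau_1' \downarrow \tau_1$ gives $\partial \hat{\vx}/\partial \tau_1 \geq \0$ and $\partial \hat{\vy}/\partial \tau_1 \leq \0$.

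The main obstacle I anticipate is upgrading these weak signs to the strict entrywise inequalities claimed. My plan is to implicitly differentiate \eqref{eqn:bi_SIS_fixed_eq1} in $\tau_1$, yielding a linear system in $\vu \triangleq \partial\hat{\vx}/\partial\tau_1$ and $\vecv \triangleq \partial\hat{\vy}/\partial\tau_1$ whose only nontrivial forcing is $\text{diag}(\ones-\hat{\vx}-\hat{\vy})\mA\hat{\vx} = \hat{\vx}/\tau_1 \gg \0$, arising from the $\tau_1$-derivative of the first equation; the derivative of the second equation has zero forcing. If some $u_i$ vanished, irreducibility of $\mA$ (connectedness of $\cG_1$) together with the off-diagonal nonnegative structure inherited from $\tau_1\,\text{diag}(\ones-\hat{\vx}-\hat{\vy})\mA$ and $\vu \geq \0$ would propagate the zero through the graph, contradicting the strictly positive forcing; hence $\vu \gg \0$. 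An analogous Perron--Frobenius-style strong-positivity argument on the $\vecv$-equation, using irreducibility of $\mB$ and the now-established $\vu \gg \0$, delivers $\vecv \ll \0$.
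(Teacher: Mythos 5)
Your core mechanism is the same as the paper's: perturb $\tau_1$ through $\beta_1$, start the perturbed flow at the old CE, observe that the initial velocity lies in the competitive (type-K) order cone because the $\vy$-block of the vector field is unchanged while the $\vx$-block strictly increases, and then let monotonicity of the flow carry this ordering to the limiting CE. The one genuine difference is how strictness is obtained: the paper invokes \emph{strong} monotonicity of the bi-SIS flow on the interior of the state space (Proposition 3.1 of \cite{vdoshi2021}), which upgrades the weak ordering to $\phi^x_{t+s}\gg\hat\vx$, $\phi^y_{t+s}\ll\hat\vy$ in one stroke, whereas you first extract only the weak inequalities and then differentiate the fixed-point equations \eqref{eqn:bi_SIS_fixed_eq1} to force strict signs. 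Your finishing move does work --- in fact more easily than you anticipate, since the differentiated $\vx$-equation reads $u_i\bigl(1+\tau_1\sum_j a_{ij}\hat x_j\bigr)=\hat x_i/\tau_1-\tau_1 v_i\sum_j a_{ij}\hat x_j+\tau_1(1-\hat x_i-\hat y_i)\sum_j a_{ij}u_j$, whose right side is bounded below by $\hat x_i/\tau_1>0$ once $\vu\geq\0$, $\vecv\leq\0$ are known, so $u_i>0$ at every node without any irreducibility propagation; the $\vecv$-equation then gives $v_i<0$ directly. Two cautions: your claim that local attractivity of the CE implies a Hurwitz linearization (hence IFT applicability) is not a valid inference --- an attracting equilibrium can have a singular Jacobian, and nondegeneracy here really rests on the genericity results of \cite{R2}, a point the paper also leaves implicit; and your identification of the monotone limit with the IFT branch (rather than some other CE of the perturbed system) deserves a word of justification, though the paper is equally silent on this.
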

From Lemma \ref{lemma:x_y_tau_relationship}, we can see that changes in $\hat \vx$ and $\hat \vy$, caused by perturbation to any of the system parameters, are always in the opposite direction. Moreover, changes in both $\hat \vx_i$ and $\hat \vy_i$ with respect to $\tau_1$ and $\tau_2$ are strict. This form of strong monotonicity helps us establish the following result, which better captures the coupled relationship of $\hat \vx$ and $\hat \vy$ with the system parameters in \eqref{eqn:bi_SIS_fixed_eq1}.

\begin{figure}[!t]
    \centering
    \includegraphics[width = 0.75\columnwidth]{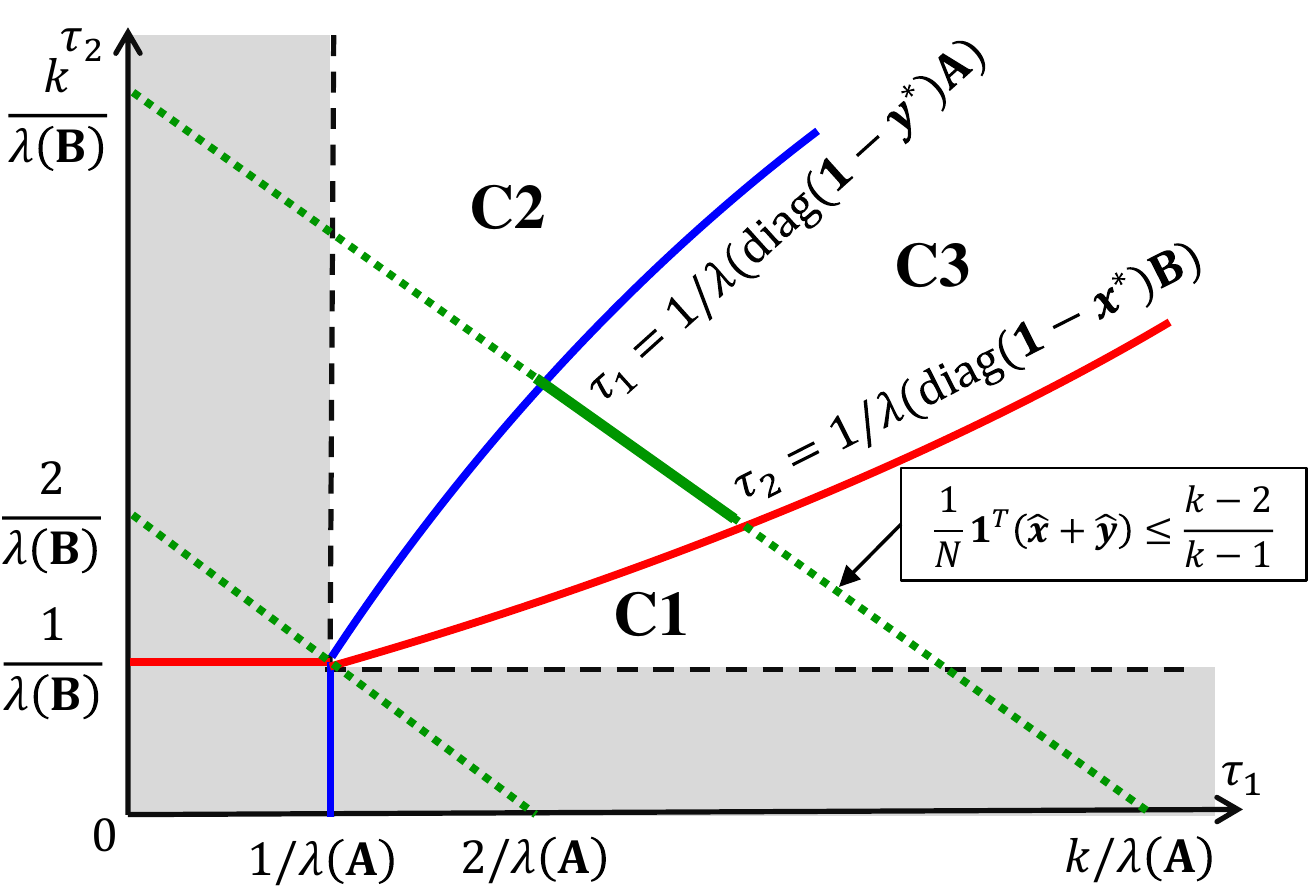}
    \caption{The unshaded region is divided by blue curve and red curve (boundary conditions of $\tau_1, \tau_2$) into three parts, corresponding to (C1)-(C3) in Section \ref{prelim_bi-sis}. The discussion on the equilibrium in the shaded region is deferred to our previous work \cite{vdoshi2021}. Green solid-line is the set of system parameters $(\tau_1,\tau_2)$ inside C3 such that they exhibit the same upper bound $(\ones^T\hat \vx + \ones^T\hat \vy)/N \leq (k-2)/(k-1)$ for $k > 2$.}
    \label{fig:my_label} 
\end{figure}

\begin{theorem}\label{thm:fi_gi_monotone}
The term $\hat y_i/(1-\hat x_i)$ strictly decreases (increases) in $\tau_1$ ($\tau_2$), $\forall i \in \cN$. Similarly, the term $\hat x_i/(1-\hat y_i)$ strictly increases (decreases) in $\tau_1$ ($\tau_2$), $\forall i \in\cN$.\qed
\end{theorem}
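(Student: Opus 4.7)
The plan is to reduce each of the two monotonicity claims to Lemma~\ref{lemma:x_y_tau_relationship} by rewriting the fixed point equations~\eqref{eqn:bi_SIS_fixed_eq1} in a decoupled form involving exactly the ratios of interest. Define $u_i \triangleq \hat x_i/(1-\hat y_i)$ and $v_i \triangleq \hat y_i/(1-\hat x_i)$. The crucial algebraic observation is that, after dividing numerator and denominator by $(1-\hat y_i)$, one gets
\begin{equation*}
    \frac{\hat x_i}{1-\hat x_i - \hat y_i} \;=\; \frac{\hat x_i/(1-\hat y_i)}{1-\hat x_i/(1-\hat y_i)} \;=\; \frac{u_i}{1-u_i},
\end{equation*}
and analogously $\hat y_i/(1-\hat x_i-\hat y_i)=v_i/(1-v_i)$.

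Substituting these identities into the two equations of~\eqref{eqn:bi_SIS_fixed_eq1} yields the equivalent form
\begin{equation*}
    \tau_1 \sum_{j\in\cN} a_{ij}\hat x_j \;=\; \frac{u_i}{1-u_i}, \qquad \tau_2 \sum_{j\in\cN} b_{ij}\hat y_j \;=\; \frac{v_i}{1-v_i}.
\end{equation*}
Since the map $s\mapsto s/(1-s)$ is strictly increasing on $[0,1)$, I can invert it and write
\begin{equation*}
    u_i \;=\; \frac{\tau_1 S_i}{1+\tau_1 S_i}, \qquad v_i \;=\; \frac{\tau_2 T_i}{1+\tau_2 T_i},
\end{equation*}
where $S_i \triangleq \sum_{j} a_{ij}\hat x_j$ and $T_i \triangleq \sum_{j} b_{ij}\hat y_j$. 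The function $r\mapsto r/(1+r)$ is strictly increasing on $(0,\infty)$, so the monotonicity of $u_i$ and $v_i$ in $\tau_1,\tau_2$ reduces to the monotonicity of $\tau_1 S_i$ and $\tau_2 T_i$ in the same parameters.

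The last step is a direct application of Lemma~\ref{lemma:x_y_tau_relationship}. For $u_i$: increasing $\tau_1$ strictly increases the multiplicative prefactor $\tau_1$ and, by Lemma~\ref{lemma:x_y_tau_relationship}, strictly increases every $\hat x_j$, so $\tau_1 S_i$ strictly increases (using connectedness of $\cG_1$ to ensure $i$ has at least one neighbor contributing). Increasing $\tau_2$ leaves the prefactor fixed but strictly decreases every $\hat x_j$, so $\tau_1 S_i$ strictly decreases. The argument for $v_i$ is symmetric with $\hat y_j$ in place of $\hat x_j$, reversing the role of $\tau_1$ and $\tau_2$.

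I do not anticipate a serious obstacle once the substitution is identified; the only thing to verify carefully is that the strict inequalities from Lemma~\ref{lemma:x_y_tau_relationship} translate into strict monotonicity of the sums $S_i,T_i$, which follows because each sum involves at least one neighbor whose coordinate moves strictly. The main conceptual point of the proof is that the ratios $u_i$ and $v_i$ are precisely the quantities that render the otherwise tightly coupled CE equations into a form where one side depends only on the $\hat x_j$'s (respectively $\hat y_j$'s), allowing the monotonicity from Lemma~\ref{lemma:x_y_tau_relationship} to propagate cleanly.
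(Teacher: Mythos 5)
Your proof is correct, but it takes a genuinely different and noticeably shorter route than the paper's. The paper works with derivatives: it differentiates the logarithm of the fixed-point equation \eqref{eqn:bi_SIS_fixed_eq1} with respect to $\tau_1$ (resp.\ $\tau_2$), extracts the intermediate bound $\left[\partial\hat y_i/\partial\hat x_i\right]_{\tau_k} < -\hat y_i/(1-\hat x_i)$ on the trade-off ratio of the two coordinates, and then feeds this into the quotient-rule expansion of $\partial f_i/\partial\tau_k$ together with the signs from Lemma~\ref{lemma:x_y_tau_relationship}. You instead notice the algebraic identity $\hat x_i/(1-\hat x_i-\hat y_i)=u_i/(1-u_i)$ with $u_i=\hat x_i/(1-\hat y_i)$, which turns \eqref{eqn:bi_SIS_fixed_eq1} into the decoupled closed forms $u_i=\tau_1 S_i/(1+\tau_1 S_i)$ and $v_i=\tau_2 T_i/(1+\tau_2 T_i)$, after which the monotonicity of $r\mapsto r/(1+r)$ and Lemma~\ref{lemma:x_y_tau_relationship} finish the argument in one line. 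Both proofs rest on Lemma~\ref{lemma:x_y_tau_relationship}, so there is no circularity, and your strictness claims are sound: $S_i=\sum_j a_{ij}\hat x_j>0$ by the fixed-point equation itself (so increasing the prefactor $\tau_1$ alone already forces strict increase), and connectivity guarantees at least one strictly moving neighbor for the $\tau_2$-direction where the prefactor is frozen. What your route buys is brevity and robustness — it needs only the monotonicity of the CE in the parameters, not differentiability, and it exposes $u_i$ and $v_i$ as the natural decoupling variables of the system. What the paper's route buys is the intermediate inequality \eqref{eqn:y_i_x_i_relationship}, a quantitative statement about how fast $\hat y_i$ must fall per unit increase of $\hat x_i$, which has some independent interest beyond the theorem itself.
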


Lemma 1 implies that $1/(1-\hat x_i)$ increases in $\tau_1$ due to $\hat x_i$ increasing in $\tau_1$, while $\hat y_i$ decreases in $\tau_1$. However, their product $\hat y_i/(1-\hat x_i)$ may not possess any apparent monotonicity in $\tau_1$, depending on the amount of increase and decrease observed by $\hat x_i$ and $\hat y_i$. Theorem \ref{thm:fi_gi_monotone} asserts that this term indeed decreases monotonically in $\tau_1$, implying that the decrease in $1-\hat x_i$ is not large enough to offset that of $\hat y_i$ for all values of $\tau_1$ in (C3). Thus, Theorem \ref{thm:fi_gi_monotone} is much sharper in capturing the coupled change in entries of $\hat \vx$ and $\hat \vy$ as the system parameters $\tau_1$ and $\tau_2$ are varied, and we are able to do this by combining Lemma 1 with careful analysis of the first order derivatives of the CE fixed point equations \eqref{eqn:bi_SIS_fixed_eq1}. We have the following corollary as a consequence of Theorem \ref{thm:fi_gi_monotone}.



\begin{figure*}[!ht]
    \centering
    \subfloat[Using AS-733-A and AS-733-B as overlaid graphs for Corollary \ref{cor:upperbound_fi_gi}.\label{fig:my_label2} ]{\includegraphics[width=0.28\textwidth]{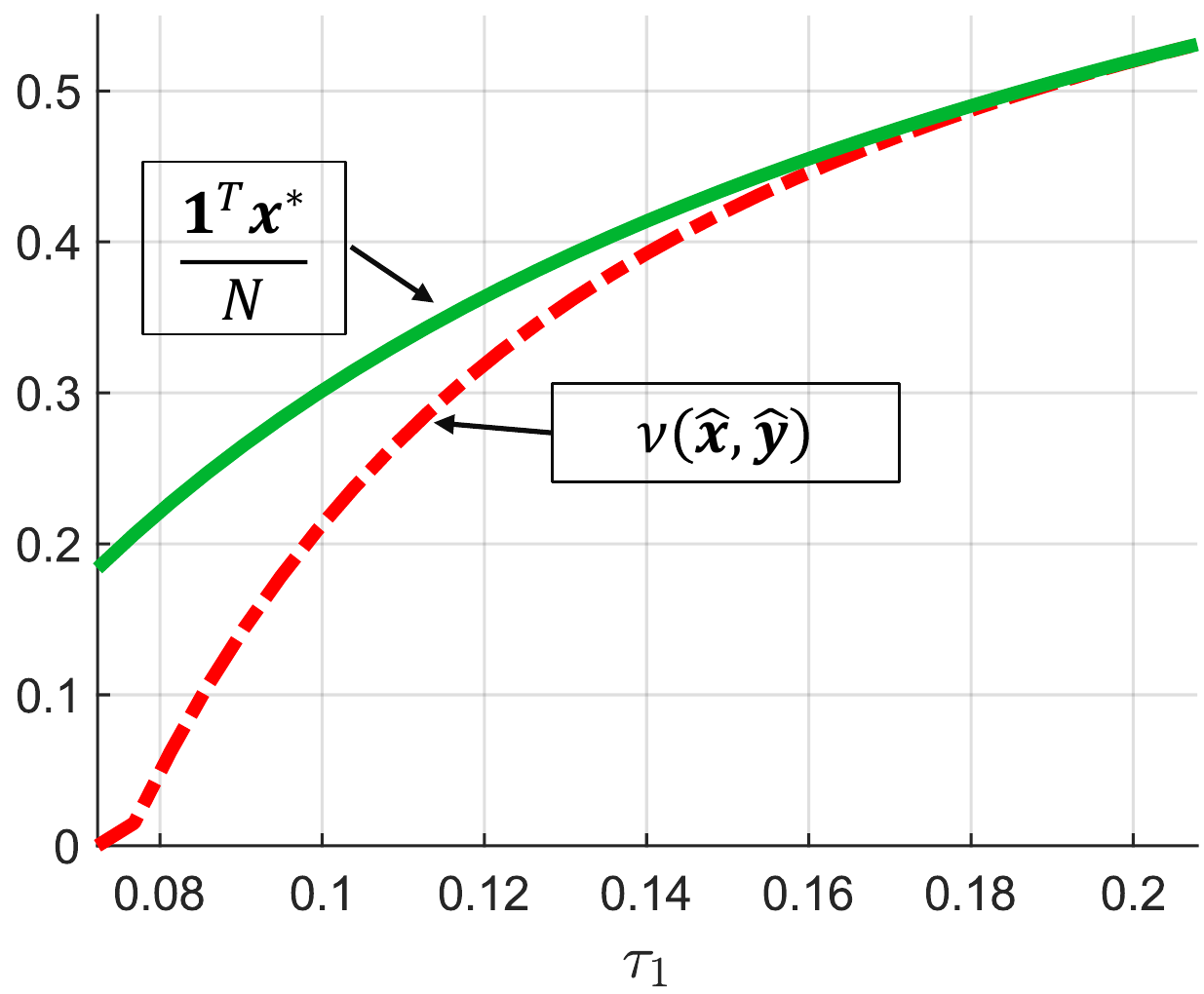}}\quad
    \subfloat[Using AS-733-A and AS-733-B as overlaid graphs for Proposition \ref{cor:upperbound_equilibrium}. \label{fig:my_label3}]{\includegraphics[width=0.28\textwidth]{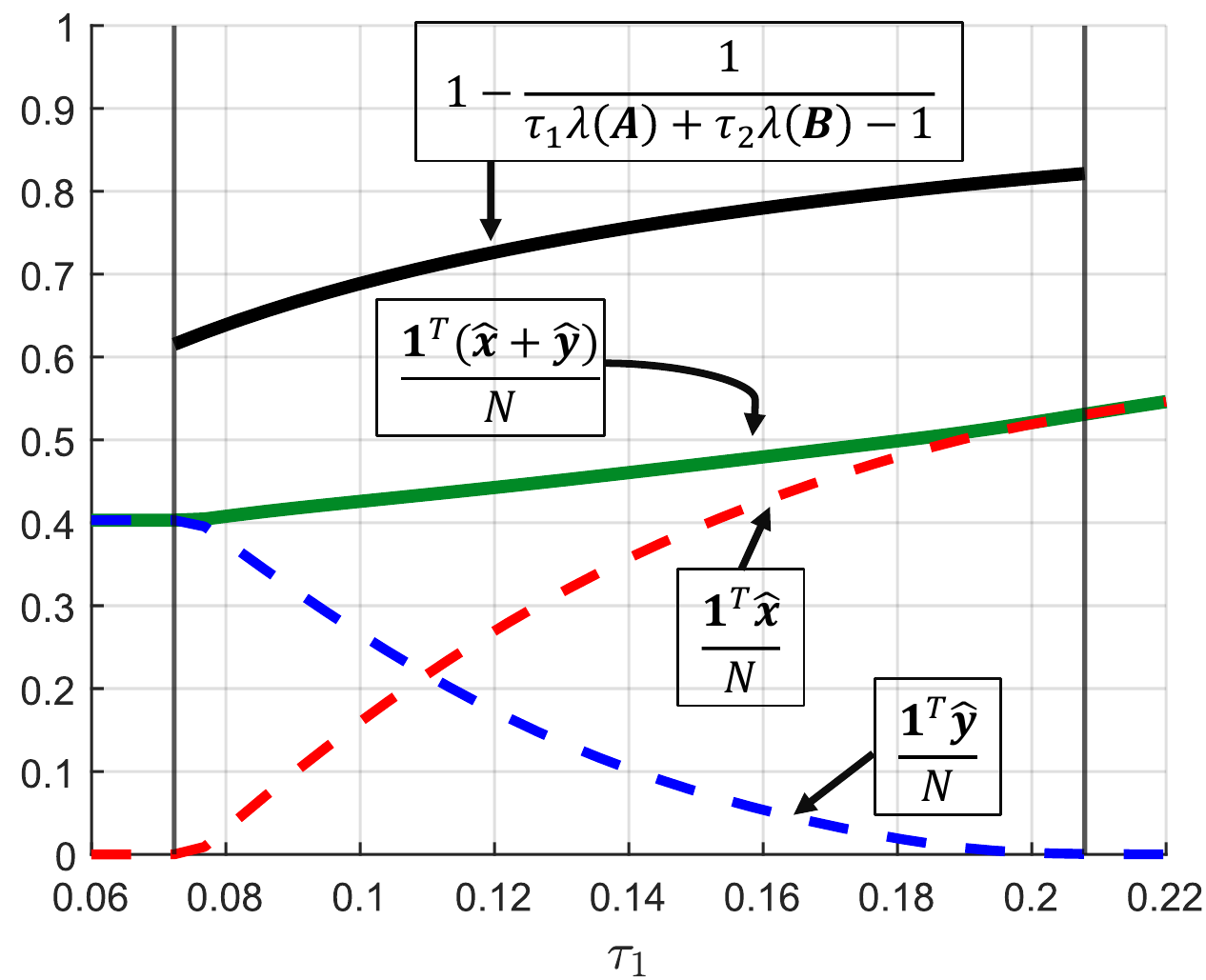}}\quad
     \subfloat[Using AS-733-B and AS-733-C as overlaid graphs for Corollary \ref{cor:upperbound_fi_gi} with three CE fixed points. \label{fig:my_label4}]{\includegraphics[width=0.32\textwidth]{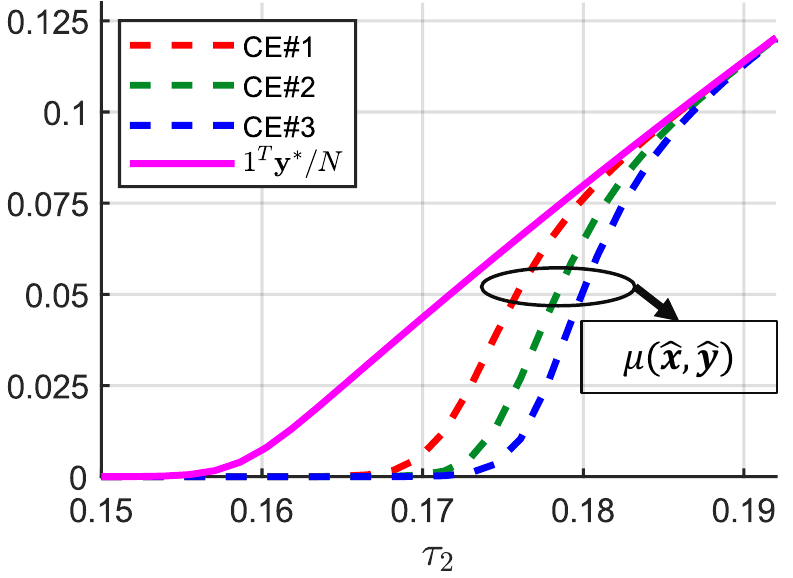}}
     \vspace{-1mm}
     \caption{Numerical results on the AS-733 graph.}
     \label{fig:myfig5}
\end{figure*}

\begin{corollary}\label{cor:upperbound_fi_gi}
For each $i \in \cN$, we have the inequalities
\begin{equation}\label{eqn:fi_gi}
    {\hat x_i} < x_i^*({1-\hat y_i}), ~~{\hat y_i} < y_i^*({1-\hat x_i}).~~~~\qed
\end{equation}
\end{corollary}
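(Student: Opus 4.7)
The plan is to rewrite the two desired bounds in the equivalent forms $\hat x_i/(1-\hat y_i) < x_i^*$ and $\hat y_i/(1-\hat x_i) < y_i^*$, since these are precisely the ratios whose strict monotonicity was just established in Theorem~\ref{thm:fi_gi_monotone}. The remaining task is to identify the value of each ratio at the boundary of region (C3), where the coexistence equilibrium degenerates into a single-virus equilibrium, and then invoke strict monotonicity to upgrade that limiting equality into a strict inequality in the interior of (C3).

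For the first bound I would fix $\tau_1 > 1/\lambda(\mA)$ and view $(\hat \vx, \hat \vy)$ as a function of $\tau_2$ along the CE branch. By Theorem~\ref{thm:fi_gi_monotone}, $\hat x_i/(1-\hat y_i)$ is strictly decreasing in $\tau_2$. The lower boundary in $\tau_2$ of region (C3) is $\tau_2^{\circ} := 1/\lambda(\text{diag}(\ones-\vx^*)\mB)$; just below $\tau_2^{\circ}$ we are in (C1), where the unique nontrivial globally attractive equilibrium is $(\vx^*, \0)$. Assuming continuity of the CE branch at this boundary, the CE must satisfy $(\hat \vx, \hat \vy) \to (\vx^*, \0)$ as $\tau_2 \downarrow \tau_2^{\circ}$, and hence $\hat x_i/(1-\hat y_i) \to x_i^*$. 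Strict monotonicity then forces $\hat x_i/(1-\hat y_i) < x_i^*$ throughout the interior of (C3), which rearranges to $\hat x_i < x_i^*(1-\hat y_i)$. The second bound follows by the symmetric argument: fix $\tau_2$, decrease $\tau_1$ toward its (C3) boundary where the CE collapses to $(\0, \vy^*)$, and use that $\hat y_i/(1-\hat x_i)$ is strictly decreasing in $\tau_1$ by Theorem~\ref{thm:fi_gi_monotone}.

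The main obstacle I anticipate is rigorously justifying the boundary-continuity step: since multiple CE may coexist in general, one has to verify that the branch being tracked depends continuously on $\tau_2$ and collapses precisely onto $(\vx^*, \0)$ at the boundary, rather than bifurcating or terminating elsewhere. If that proves delicate, a clean backup is to avoid the limit entirely by working algebraically: substituting $z_i := \hat x_i/(1-\hat y_i)$ into the CE fixed-point equation \eqref{eqn:bi_SIS_fixed_eq1} and using the identity $1-\hat x_i-\hat y_i = (1-\hat y_i)(1-z_i)$ yields $z_i/[\tau_1(1-z_i)] = \sum_{j}a_{ij}z_j(1-\hat y_j)$. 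Because $\hat \vy \gg \0$ and $\mA$ is irreducible, the right-hand side is strictly smaller than $\sum_j a_{ij} z_j$, so $\vz$ is a strict subsolution of the single-SIS fixed-point equation satisfied by $\vx^*$; a Perron--Frobenius style comparison (or invoking the MDS structure of \eqref{eqn:ode_sis} with its unique positive attractor $\vx^*$) then gives $\vz \ll \vx^*$, which is the first bound in \eqref{eqn:fi_gi}, and the second follows by exchanging the roles of the two viruses.
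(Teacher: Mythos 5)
Your primary argument is essentially the paper's own proof: the paper also observes that at the boundary of (C3) (e.g.\ $\tau_1 = 1/\lambda(\text{diag}(\ones-\vy^*)\mA)$) the equilibrium degenerates to the single-virus fixed point, so that $f_i = \hat y_i/(1-\hat x_i) = y_i^*$ there, and then invokes the strict monotonicity of Theorem~\ref{thm:fi_gi_monotone} to get $f_i < y_i^*$ in the interior, with the symmetric argument for $g_i = \hat x_i/(1-\hat y_i) < x_i^*$. The continuity-of-the-branch issue you flag is real, and the paper does not treat it any more rigorously than you do --- it simply asserts the boundary values. Your backup argument is a genuinely different and arguably cleaner route: substituting $z_i = \hat x_i/(1-\hat y_i)$ into \eqref{eqn:bi_SIS_fixed_eq1} via $1-\hat x_i-\hat y_i = (1-\hat y_i)(1-z_i)$ does give $z_i/[\tau_1(1-z_i)] = \sum_j a_{ij}z_j(1-\hat y_j) < \sum_j a_{ij}z_j$ (using $\hat\vy\gg\0$ and connectivity), so the single-SIS vector field \eqref{eqn:ode_sis} is strictly positive at $\vz$; by monotonicity of that system the trajectory from $\vz$ is nondecreasing and converges to the unique positive attractor $\vx^*$, whence $\vz \ll \vx^*$. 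This avoids any limit along the CE branch and does not even require Theorem~\ref{thm:fi_gi_monotone}, at the cost of invoking the MDS structure of the single-virus system; either route establishes the corollary.
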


To understand the implication of Corollary \ref{cor:upperbound_fi_gi}, we briefly consider the example of competing products (modelled as viruses). Where a new product (Product 1) enters a market, more often than not, there is another existing dominant product (Product 2) enjoying its own market share $\hat \vy = \vy^*$. Through mechanisms such as marketing techniques, the Product 1 increases its own influence $\tau_1$, and eventually gains a foothold into the market $\hat \vx \gg \0$. From Lemma 1, we can only guess that the market share $\hat \vy$ would fall below its initial dominating value $\vy^*$, but there is not much one can say in terms of quantifying the reduction in $\hat\vy$. From Corollary \ref{cor:upperbound_fi_gi}, we now know that at each node $i\in\cN$, the influence of Product 1 $\hat y_i$ will fall by a factor of at least $(1-\hat x_i)$ compared to its original value of $y_i^*$. This is particularly useful when the competing viruses have access to the information about each others' local market share at each node. When this information is not available, we have the following proposition which decouples the complicated relationship between CE fixed points.




%

\begin{proposition}\label{cor:upperbound_equilibrium}
Let $(\hat \vx, \hat \vy)$ be a CE fixed point of the bi-SIS system \eqref{eqn:ode_bi-sis}. Then, the average number of infected nodes in the network $(\ones^T\hat\vx + \ones^T\hat \vy)/N$ is upper bounded as 
\begin{equation}\label{eqn:bi-sis upperbound}
    \frac{1}{N}(\ones^T\hat \vx + \ones^T\hat \vy) < 1 - \frac{1}{\tau_1\lambda(\mA) + \tau_2\lambda(\mB) - 1}.~~~~\qed
\end{equation}
\end{proposition}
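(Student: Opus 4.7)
The plan is: use Corollary \ref{cor:upperbound_fi_gi} to derive a pointwise upper bound on $1/(1-\hat x_i - \hat y_i)$ in terms of $x_i^*, y_i^*$; sum over $i \in \cN$ with a single-SIS estimate analogous to Proposition \ref{prop:sis_upperbound}; and finally apply Jensen's inequality to $t\mapsto 1/(1-t)$ to pass to the network-wide average.

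First I would set $\tilde x_i \triangleq \hat x_i/(1-\hat y_i)$ so that Corollary \ref{cor:upperbound_fi_gi} reads $\tilde x_i < x_i^*$. Writing $\hat z_i = \hat x_i + \hat y_i$, the identity $1 - \tilde x_i = (1-\hat z_i)/(1-\hat y_i)$ rearranges to $\hat x_i/(1-\hat z_i) = \tilde x_i/(1-\tilde x_i)$, and monotonicity of $t \mapsto t/(1-t)$ on $[0,1)$ gives $\hat x_i/(1-\hat z_i) < x_i^*/(1-x_i^*)$; symmetrically $\hat y_i/(1-\hat z_i) < y_i^*/(1-y_i^*)$. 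Adding these and using $1/(1-t) = 1 + t/(1-t)$ produces the clean pointwise bound
\begin{equation*}
    \frac{1}{1-\hat z_i} \;<\; \frac{1}{1-x_i^*} + \frac{1}{1-y_i^*} - 1, \qquad i \in \cN.
\end{equation*}
Summing over $i$ and invoking the auxiliary bound $\sum_i 1/(1-x_i^*) \leq N \tau_1 \lambda(\mA)$ and its counterpart for $\vy^*$, one obtains $\sum_i 1/(1-\hat z_i) < N(\tau_1 \lambda(\mA) + \tau_2 \lambda(\mB) - 1)$. Convexity of $1/(1-t)$ together with Jensen's inequality then yields $1/(1-\bar z) \leq \frac{1}{N}\sum_i 1/(1-\hat z_i) < \tau_1 \lambda(\mA) + \tau_2 \lambda(\mB) - 1$, where $\bar z \triangleq (\ones^T\hat\vx + \ones^T\hat\vy)/N$, and rearranging gives the claim.

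The hard part will be justifying the auxiliary single-SIS estimate $\sum_i 1/(1-x_i^*) \leq N \tau_1 \lambda(\mA)$, which is not stated verbatim in Proposition \ref{prop:sis_upperbound} but falls out of its Perron--FKG proof. Multiplying the single-SIS identity $1/(1-x_i^*) = 1 + \tau_1(\mA\vx^*)_i$ by the Perron eigenvector $\vw \gg \0$ of $\mA$ (normalized so $\ones^T \vw = 1$) gives $\sum_i w_i/(1-x_i^*) = 1 + \tau_1 \lambda(\mA)\, \vw^T \vx^*$; a Jensen step on the probability measure $\vw$ then forces $\vw^T \vx^* \leq 1 - 1/(\tau_1 \lambda(\mA))$, so the Perron-weighted average satisfies $\sum_i w_i/(1-x_i^*) \leq \tau_1 \lambda(\mA)$, and FKG/Chebyshev positive association between $w_i$ and the monotone functional $1/(1-x_i^*)$ upgrades this to the uniform average. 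The ``$-1$'' in the denominator of the target stems precisely from this per-node saving via $1/(1-t) = 1 + t/(1-t)$ and cannot be obtained from a naive Weyl-type bound $\lambda(\tau_1\mA + \tau_2\mB) \leq \tau_1 \lambda(\mA) + \tau_2 \lambda(\mB)$ applied to $\hat\vz$ viewed as a sub-fixed point of a merged single-SIS system (which only delivers the weaker $1 - 1/(\tau_1 \lambda(\mA) + \tau_2 \lambda(\mB))$).
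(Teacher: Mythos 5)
Your overall architecture coincides with the paper's: both proofs reduce the claim to the pointwise estimate $1/(1-\hat x_i-\hat y_i) < 1/(1-x_i^*)+1/(1-y_i^*)-1$ derived from Corollary \ref{cor:upperbound_fi_gi}, then combine the single-SIS estimate $\frac{1}{N}\sum_i 1/(1-x_i^*)\le \tau_1\lambda(\mA)$ (and its analogue for $\vy^*$) with Jensen's inequality for the convex map $t\mapsto 1/(1-t)$. Your route to the pointwise bound is in fact cleaner than the paper's: you use the identity $\hat x_i/(1-\hat x_i-\hat y_i)=\tilde x_i/(1-\tilde x_i)$ together with the monotonicity of $t/(1-t)$, whereas the paper maximizes the minimum of two linear upper bounds on $\hat x_i+\hat y_i$ over $\hat x_i$ and lands on the same expression \eqref{eqn:1-x-y_lowerbound}. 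The subsequent Jensen step differs only in bookkeeping (you apply it to $1/(1-\hat z_i)$ with $\hat z_i=\hat x_i+\hat y_i$, the paper to $1/(Z-1)$ with $Z$ built from $x_i^*,y_i^*$); both are valid.

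The one genuine weak point is your justification of the auxiliary estimate $\sum_i 1/(1-x_i^*)\le N\tau_1\lambda(\mA)$. The Perron-weighted version $\sum_i w_i/(1-x_i^*)\le\tau_1\lambda(\mA)$ is fine, but the final ``upgrade'' to the uniform average via a Chebyshev/FKG association step requires that the Perron weights $w_i$ and the fixed-point entries $x_i^*$ be similarly ordered across nodes. That comonotonicity is not established, and it does not hold for general graphs and general $\tau_1$: away from the epidemic threshold the ranking of nodes by steady-state infection probability need not agree with their eigenvector centrality, so as written this step does not go through. Fortunately the estimate you need is available without any ordering assumption: it is precisely the intermediate inequality \eqref{eqn:lower_bound_expect_form} in the paper's proof of Proposition \ref{prop:sis_upperbound}, obtained by bounding $\lambda(\mA)$ below by the Rayleigh quotient evaluated at $\vx^*$, substituting the fixed-point equation \eqref{eqn:fixed_point_eq}, and applying the FKG inequality to the two increasing functions $Y^2$ and $1/(1-Y)$ of a uniformly distributed $Y$. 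Replacing your Perron-eigenvector argument by a direct appeal to \eqref{eqn:lower_bound_expect_form} closes the gap and renders your proof complete and essentially identical to the paper's.
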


Suppose that $\tau_1\lambda(\mA)$ and $\tau_2\lambda(\mB)$ are only slightly larger than $1$, implying (in light of condition (C3)) that quantities $\tau_1\lambda(\text{diag}(\ones\!-\!\vy^*)\mA)$ and $\tau_2\lambda(\text{diag}(\ones\!-\!\vx^*)\mB)$ are also only slightly larger than $1$, and just barely satisfy the coexistence condition (C3) by a small margin. In this case, the average number of infected nodes $(\ones^T\hat \vx + \ones^T \hat \vy)/N$ must also be small (albeit strictly positive), as would be expected. Note however that the upper bound in \eqref{eqn:bi-sis upperbound} holds for much larger values of $\tau_1, \tau_2$ as long as they satisfy the CE condition (C3). For instance in Figure \ref{fig:my_label}, the green dash-line represents the set of all possible $\tau_1,\tau_2$ in the CE condition satisfying $\tau_1 \lambda(\mA) + \tau_2 \lambda(\mB) = k$ such that the upper bound remains the same for all the parameters in this (level) set. We also note that the upper bound in \eqref{eqn:bi-sis upperbound} holds for all possible (finitely many) CE fixed points. In addition, our bound on the CE decouples the cross-dependency of competing viruses on overlaid graphs, into \textit{each} of single-SIS on its own graph (upper bound in \eqref{eqn:bi-sis upperbound} dependent only on the graphs, and not on $\vx^*$ or $\vy^*$). This will shed some light on how to fine-tune the system parameters $\tau_1,\tau_2$ or how tho modify the graph adjacency matrices $\mA,\mB$ for some real-world applications, e.g, strategy design in the medical area in order to control the infection probability of either of two viruses with limited medical resources.



\section{Numerical Results}\label{section:numerical}

\begin{table}[b!] \vspace{-5mm}
\renewcommand{\arraystretch}{1.3}
\caption{System Parameters}
\label{table_example}
\centering
\begin{tabular}{|c|c|}
\hline
$\tau_1\!:\! 0.06 \sim 0.22$, $\tau_2\!=\!0.3173$ & $\lambda(\mA)\!=\!22.13$, $\lambda(\mB)\!=\!6.3$\\
\hline
$\tau_1 \!=\! 0.17$, $\tau_2\!:\! 0.15\sim 0.92$ & $\lambda(\mC) \!=\! 6.59$, $\lambda(\mB) \!=\! 6.3$\\
\hline
\end{tabular}
\end{table}

In this section, we present numerical results to assess the tightness of the upper bounds in Corollary \ref{cor:upperbound_fi_gi} and Proposition \ref{cor:upperbound_equilibrium}. To this end, we consider an undirected, connected graph AS-733 from the SNAP repository \cite{leskovec2014snap} and generate three graphs with the same $103$ nodes but with different edge sets, by modifying the edges of AS-733 while preserving the connectivity. The new graphs AS-733-A, AS-733-B and AS-733-C have $616$, $267$, and $297$ edges, respectively. Table \ref{table_example} summarizes the range of system parameters, chosen in a way to ensure that $\tau_1>1/\lambda(\mA)$ and $\tau_2>1/\lambda(\mB)$, i.e., the infection rates never lie in the gray region in Figure \ref{fig:my_label}. We numerically solve the ODE system \eqref{eqn:ode_bi-sis} for the chosen parameters until convergence is observed.

To capture how the upper bounds behave with change in system parameters, we fix $\tau_2$ and vary $\tau_1$ in Figure \ref{fig:my_label2}, \ref{fig:my_label3}. Denote by $\mu(\hat \vx, \hat \vy) \triangleq (1/N)\!\sum_{i} \hat{y}_i/(1-\hat{x}_i)$, and $\nu(\hat \vx, \hat \vy) \triangleq (1/N)\!\sum_{i} \hat{x}_i/(1-\hat{y}_i)$. In Figure \ref{fig:my_label2}, the range of $\tau_1$ and $\tau_2 $values are under the CE condition. We can see that $\nu(\hat \vx, \hat \vy)$ is increasing in $\tau_1$, as is expected from Theorem \ref{thm:fi_gi_monotone}. In addition, $\nu(\hat \vx, \hat \vy)$ gets closer to the average infection probability of Virus 1 in the single-SIS case as $\tau_1$ increases because Virus 1 becomes more dominant over Virus 2 and the bi-SIS ODE system \eqref{eqn:ode_bi-sis} behaves similar to the single-virus system. 
For Figure \ref{fig:my_label3}, as $\tau_1$ increases over the range of values from the first row in Table \ref{table_example}, the system parameters transit from regions (C1) to (C3) to (C2) of Figure \ref{fig:my_label}; Virus 1 dies out for $\tau_1 \in [0.06,0.072]$ while Virus 2 survives, both viruses survive for $\tau_1 \in (0.072, 0.208)$, and Virus 2 dies out while Virus 1 survives for $\tau_1 \in [0.208,0.22]$. The upper bound (in black line) also captures the trend for the average probability of being infected by either virus and has good estimation as $\tau_1$ increases.

Next, we fix $\tau_1$ and vary $\tau_2$, which is given in the second row in Table \ref{table_example}, in order to see how $\tau_2$, instead of $\tau_1$, can affect $\mu(\hat \vx, \hat \vy)$ in Corollary \ref{cor:upperbound_fi_gi}. Unlike to Figure \ref{fig:my_label2} and \ref{fig:my_label3} that only contain single CE, we use AS-733-B and AS-733-C as overlaid graphs to show the existence of multiple CE fixed points in Figure \ref{fig:my_label4}, which all satisfy \eqref{eqn:fi_gi} in Corollary \ref{cor:upperbound_fi_gi}. The curves in Figure \ref{fig:my_label4} also share the similar trend in Figure \ref{fig:my_label2} that the upper bound becomes tight for large $\tau_2$.

\section{Conclusion} \label{Section: Conclusion}
In this paper we have provided, for the first time, quantitative results on the coexistence equilibria of bi-SIS epidemic models for general graphs. A future direction can include similar analysis for graphs with special topologies such as star, and line graphs, as well as cases such as ER random graphs, for which one could potentially obtain tighter results than those in Section \ref{section:single-sis results} which were presented for general graphs. \vspace{-3mm}

\bibliographystyle{IEEEtran}
\bibliography{sis}

\appendices
\section{Proofs of results}\label{appendix}

\begin{proof}[Proof of Proposition \ref{prop:sis_upperbound}]
By considering the $i$-th entry in \eqref{eqn:ode_sis} and setting $dx_i/dt = 0$, we obtain the fixed point equation\vspace{-2mm}
\begin{equation}\label{eqn:fixed_point_eq}\vspace{-2mm}
    \sum_{j\in\cN}a_{ij}x^*_j = \frac{1}{\tau}\frac{x^*_i}{1-x^*_i}.
\end{equation}
From the min-max theorem, we write $\lambda(\mA)$ in the Rayleigh quotient form such that\vspace{-2mm}
\begin{equation}\label{eqn:weightedsum_lowerbound}
\begin{split}
    \lambda(\mA) \!=\! \max_{\vy\neq\0}\!\frac{\vy^T\mA\vy}{\vy^T\vy} \!\geq\! \frac{(\vx^*)^T\mA\vx^*}{(\vx^*)^T\vx^*} \!=\! \frac{\sum\limits_{i\in\cN}x^*_i\!\!\sum\limits_{j\in\cN}a_{ij}x^*_j}{\sum\limits_{i\in\cN} (x_i^*)^2}.
\end{split}\vspace{-2mm}
\end{equation}
where the inequality comes from picking $\vy = \vx^*$ and the second equality is by rewriting the matrix multiplication in the summation notation. Define a random variable $Y$ which takes values $x^*_i$ with probability $1/N$ for all $i\in\cN$, and $\E[Y] = \ones^T\vx^*/N$. Then, replacing $\sum_{j\in\cN}a_{ij}x^*_j$ in \eqref{eqn:weightedsum_lowerbound} with \eqref{eqn:fixed_point_eq} gives\vspace{-2mm}
\begin{equation}\label{eqn:lower_bound_expect}
    \!\!\lambda(\mA) \geq \frac{\frac{1}{N}\sum_{i\in\cN} (x_i^*)^2/(1-x_i^*)}{\tau\frac{1}{N}\sum_{i\in\cN} (x_i^*)^2} = \frac{\E[Y^2/(1-Y)]}{\tau\E[Y^2]}.
\end{equation}
Since $Y^2$ and $1/(1-Y)$ are both increasing functions in $Y\in (0,1)$, FKG inequality \cite{FKG_Inequality_paper} gives $\E[Y^2/(1-Y)] \geq \E[Y^2]\E[1/(1-Y)]$.
Then, from \eqref{eqn:lower_bound_expect} we have \vspace{-1mm}
\begin{equation}\label{eqn:lower_bound_expect_form}\vspace{-1mm}
    \!\!\!\tau\lambda(\mA) \!\geq\! \frac{\E[Y^2]\E[1/(1\!-\!Y)]}{\E[Y^2]} \!=\! \E\left[\frac{1}{1\!-\!Y}\right] \!\geq\! \frac{1}{1\!-\!\E[Y]},
\end{equation}
where the second inequality comes from Jensen's inequality. Rearranging \eqref{eqn:lower_bound_expect_form} gives \eqref{eqn:fixed_point_upperbound}.

We denote $\R_+$ the set of all positive real numbers. Then, with the Collatz-Wielandt formula \cite{meyer2000matrix}, we rewrite $\lambda(\mA)$ as\vspace{-1mm}
\begin{equation}\label{eqn:cw_formulation}\vspace{-1mm}
    \!\!\lambda(\mA) \!=\! \min_{\vy \in \R_{+}^N}\!\max_{i\in\cN} \frac{[\mA\vy]_i}{y_i} \!\leq\! \max_{i\in\cN} \frac{[\mA \vx^*]_i}{x_i^*} \!=\! \frac{1}{\tau(1\!\!-\!\!x_{\max})},
\end{equation}
where the inequality is from choosing $\vy = \vx^*$. The second equality in \eqref{eqn:cw_formulation} comes from \eqref{eqn:fixed_point_eq}. Then, rearranging \eqref{eqn:cw_formulation} gives $x_{\max} \geq 1 - 1/\tau\lambda(\mA)$.
\end{proof}

\begin{proof}[Proof of Lemma \ref{lemma:x_y_tau_relationship}]
We only present the proof for the behaviour of $\hat \vx$ and $\hat \vy$ in $\tau_1$, since the case involving $\tau_2$ follows by symmetry. Consider the bi-SIS model\vspace{-1mm}
\begin{equation}\label{eqn:ode_bi-sis_epsilon}\vspace{-1mm}
    \begin{split}
        \dot \vx &= (\beta_1+\epsilon)\text{diag}\left(\ones-\vx - \vy \right)\mA\vx - \delta_1\vx \\
        \dot \vy &= \beta_2\text{diag}\left(\ones-\vx - \vy\right)\mB\vy - \delta_2\vy,
    \end{split}
\end{equation}
where we use $\epsilon$ as the parameter to vary $\tau_1$. It is enough to show that entries $\hat x_i$ of $\hat \vx$ ($\hat y_i$ of $\hat \vy$) increase (decrease) in $\epsilon$.\footnote{Instead of $\beta_1+\epsilon$, we could also vary $\tau_1$ by replacing $\delta_1$ with $\delta-\epsilon$. The steps remain similar, with both methods leading to the same conclusion.}. 

We now consider trajectories of system \eqref{eqn:ode_bi-sis_epsilon} starting from $(\hat \vx, \hat \vy)$. Note that by definition $(\hat \vx,\hat \vy) \gg (\0,\0)$ is a fixed point of the system when $\epsilon = 0$, and in this case we have $\dot \vx = 0$ and $\dot \vy = 0$. However for any $\epsilon > 0$, we have $\dot \vx \gg \0$ and $\dot \vy = \0$. Let $\phi_t(\vx,\vy)\in[0,1]^2N$ denote the flow of the system at time $t> 0$, with initial point $(\vx,\vy) \in D$, with $\phi^x_t(\vx,\vy)\in[0,1]^N$ corresponding to the the infection probabilities for Virus 1, and $\phi^y_t(\vx,\vy)\in[0,1]^N$ corresponding to those of Virus 2. Then for any $\epsilon > 0$ sufficiently small $s>0$ we have $\phi^x_s(\hat \vx,\hat \vy) > \phi^x_0(\hat \vx,\hat \vy) = \hat \vx$, and $\phi^y_s(\hat \vx,\hat \vy) = \phi^y_0(\hat \vx,\hat \vy) = \hat \vy$.

As a consequence of Proposition 3.1 in \cite{vdoshi2021}, the bi-SIS system is considered to be \textit{strongly monotone} in $\text{Int}(D)$.\footnote{A short educational primer on MDS is included in \cite{vdoshi2021}.} The result is that $\phi^x_{t+s}(\hat \vx,\hat \vy) \gg \vx$, and $\phi^y_{t+s}(\hat \vx,\hat \vy) \ll \vy$ for all $t > 0$. For any $\epsilon > 0$, let $(\hat \vx_\epsilon, \hat \vy_\epsilon) \triangleq \lim_{t\to\infty}\phi_t(\hat \vx,\hat \vy)$ denote the convergent point of trajectory starting from $(\hat \vx, \hat \vy)$. Since we consider only small enough $\epsilon>0$ that still satisfying the coexistence conditions (C3), the point $(\hat \vx_\epsilon, \hat \vy_\epsilon) \!\gg\! (\0,\0)$ is now the CE fixed point corresponding to the choice of $\epsilon > 0$, and satisfies $\hat \vx_\epsilon \!\gg\! \hat \vx$ and $\hat \vy_\epsilon \ll \hat \vy$. By similar arguments, the reverse holds true when $\epsilon \!<\! 0$, while still satisfying conditions (C3), that is the trajectories starting from $(\hat\vx,\hat \vy)$ converge to another CE fixed point such that $\hat \vx_\epsilon \!\ll\! \hat \vx$ and $\hat \vy_\epsilon \!\gg\! \hat \vy$.

Therefore, for all system parameters $\tau_1,\tau_2$ satisfying coexistence conditions (C3), and lying within the corresponding region in Figure \ref{fig:my_label}, the entries of $\hat \vx$ increase (entries of $\hat \vy$ decrease) in $\tau_1$ for any CE. This completes the proof.
\end{proof}

\begin{proof}[Proof of Theorem \ref{thm:fi_gi_monotone}]
Let $f_i \triangleq \hat y_i/(1-\hat x_i)$ and $g_i \triangleq \hat x_i/(1-\hat y_i)$ for notation simplicity. We only prove that $f_i$ is decreasing (increasing) in $\tau_1$ ($\tau_2$), since the similar result for $g_i$ follows by a symmetric argument. 

We use the notation by $\left[ {\partial \hat{y}_i}/{\partial \hat{x}_j} \right]_{\tau_1} \triangleq \frac{{\partial\hat{y}_i}/{\partial \tau_1}}{\partial \hat{x}_j/\partial \tau_1}$ to denote the change in $\hat y_i$ with respect to change $\hat x_i$ due to increase or decrease in $\tau_1$. Similarly, we use the notation $\left[ {\partial \hat{x}_i}/{\partial \hat{y}_j} \right]_{\tau_2}$ to denote the change in $\hat x_i$ with respect to change in $\hat y_i$ due increase or decrease in $\tau_2$. We first prove the following:\vspace{-2mm}
\begin{equation}\label{eqn:y_i_x_i_relationship}\vspace{-2mm}
    \left[\frac{\partial \hat{y}_i}{\partial \hat{x}_i}\right]_{\tau_1} \!\!\!\!< -\frac{\hat y_i}{1-\hat x_i},
    ~~\left[\frac{\partial \hat{y}_i}{\partial \hat{x}_i}\right]_{\tau_2} \!\!\!\!< -\frac{\hat y_i}{1-\hat x_i}
\end{equation}
Taking partial derivative of the logarithm of $\sum_{j} b_{ij}\hat{y}_j$ in the fixed point equation \eqref{eqn:bi_SIS_fixed_eq1} with respect to $\tau_2$, we obtain \vspace{-3mm}
\begin{equation}\label{eqn:bi_SIS_fixed_eq3}\vspace{-3mm}
        \frac{1}{\sum\limits_{j\in\cN}\!\!\! b_{ij} \hat{y}_j}\!\!\sum_{j\in\cN}\!\! b_{ij}\frac{\partial \hat{y}_j}{\partial \tau_2} \!=\! \frac{\partial \hat{y}_i}{\partial \tau_2}\!\!\left(\!\frac{1}{\hat{y}_i} \!+\! \frac{1\!\!+\!\!\left[\frac{\partial \hat{x}_i}{\partial \hat{y}_i}\right]_{\tau_2}}{1\!-\!\hat{x}_i\!-\!\hat{y}_i}\!-\!\frac{\partial \tau_2}{\tau_2\partial \hat{y}_i}\!\right)\!\!.
\end{equation}
The left-hand side of \eqref{eqn:bi_SIS_fixed_eq3} is positive since $b_{ij}\geq 0$, $\hat{x}_j \in (0,1)$, and $\partial \hat{y}_j/\partial \tau_2 > 0$ in Lemma \ref{lemma:x_y_tau_relationship} for all $i,j\in\cN$. Then, we have\vspace{-3mm}
\begin{equation}\label{eqn:bi_SIS_fixed_eq4}\vspace{-1mm}
    0 \!<\! \frac{1}{\hat{y}_i} + \frac{1+\left[\frac{\partial \hat{x}_i}{\partial \hat{y}_i}\right]_{\tau_2}}{1\!-\!\hat{y}_i\!-\!\hat{x}_i}-\frac{1}{\tau_2\partial \hat{y}_i/\partial \tau_2} \!<\! \frac{1}{\hat{y}_i} + \frac{1\!+\!\left[\frac{\partial \hat{x}_i}{\partial \hat{y}_i}\right]_{\tau_2}}{1\!-\!\hat{x}_i\!-\!\hat{y}_i},
\end{equation}
where the first inequality is from the positivity of the terms in \eqref{eqn:bi_SIS_fixed_eq3}, and the second equality comes be removing the third summand, which we can do since $\partial \hat y_i/\partial \tau_2>0$ from Lemma \ref{lemma:x_y_tau_relationship}. Then, rearranging \eqref{eqn:bi_SIS_fixed_eq4} with respect to $\left[{\partial \hat{y}_i}/{\partial \hat{x}_i}\right]_{\tau_2}$ gives us $\left[{\partial \hat{y}_i}/{\partial \hat{x}_i}\right]_{\tau_2} < -\hat y_i/(1-\hat x_i)$. Performing the same steps by differentiating by the logarithm of the fixed point equation by $\tau_1$ instead of $\tau_2$ gives us $\left[{\partial \hat{y}_i}/{\partial \hat{x}_i}\right]_{\tau_1} < -\hat y_i/(1-\hat x_i)$, proving \eqref{eqn:y_i_x_i_relationship}.
Now, in order to show $f_i$ is strictly decreasing in $\tau_1>\tau_1^*$, it is enough to show $\partial f_i/\partial \tau_1 < 0$. Taking partial derivative of $f_i$ with respect to $\tau_1$ gives\vspace{-2mm}
\begin{equation}\label{eqn:f_i_derivative}\vspace{-2mm}
    \frac{\partial f_i}{\partial \tau_1} = \frac{(1-\hat{x}_i)\frac{\partial \hat{y}_i}{\partial \tau_1} + \hat{y}_i \frac{\partial \hat{x}_i}{\partial \tau_1}}{(1-\hat{x}_i)^2} = \frac{\partial \hat{x}_i}{\partial \tau_1}\!\cdot\! \frac{(1-\hat{x}_i)\left[\frac{\partial \hat{y}_i}{\partial \hat{x}_i}\right]_{\tau_1} \!\!\!\!+ \hat{y}_i}{(1-\hat{x}_i)^2}.
\end{equation}

The first inequality of \eqref{eqn:y_i_x_i_relationship} gives $(1\!\!-\!\!\hat{x}_i)\left[{\partial \hat{y}_i}/{\partial \hat{x}_i}\right]_{\tau_1} \!\!\!+ \hat{y}_i \!\!<\!\! 0$. Together with $\partial \hat{x}_i/\partial \tau_1 > 0$ from Lemma \ref{lemma:x_y_tau_relationship}, we can see from \eqref{eqn:f_i_derivative} that $\partial f_i/\partial \tau_1 < 0$ for all $i\in\cN$. Similarly, we have\vspace{-2mm}
\begin{equation}\vspace{-2mm}
    \frac{\partial f_i}{\partial \tau_2} = \frac{\partial \hat{x}_i}{\partial \tau_2}\!\cdot\! \frac{(1-\hat{x}_i)\left[\frac{\partial \hat{y}_i}{\partial \hat{x}_i}\right]_{\tau_2} \!\!\!\!+ \hat{y}_i}{(1-\hat{x}_i)^2} > 0
\end{equation}
because $\partial \hat{x}_i/\partial \tau_2 < 0$ from Lemma \ref{lemma:x_y_tau_relationship}, showing that $f_i$ is strictly increasing in $\tau_2$ for all $i\in\cN$.
\end{proof}



\begin{proof}[Proof of Corollary \ref{cor:upperbound_fi_gi}]
We follow the notations $f_i \triangleq \hat y_i/(1-\hat x_i)$ and $g_i \triangleq \hat x_i/(1-\hat y_i)$ and assume the coexistence condition $\tau_1 > 1/\lambda(\text{diag}(\ones-\vy^*)\mA)$ and $\tau_2>1/\lambda(\text{diag}(\ones-\vx^*)\mB)$. 

If $\tau_1 \leq 1/\lambda(\text{diag}(\ones-\vy^*)\mA)$, then virus $1$ will die out, i.e., $\hat{x}_i = 0, \hat{y}_i = y^*_i$ and $f_i = y^*_i$ for all $i\in\cN$. Since $f_i$ is decreasing in $\tau_1 > 1/\lambda(\text{diag}(\ones-\vy^*)\mA)$, we have $f_i < y^*_i$. 

Similarly, if $\tau_2 \leq 1/\lambda(\text{diag}(\ones-\vx^*)\mB)$, Virus $2$ will die out, i.e., $\hat{y}_i = 0, \hat{x}_i = x^*_i$ and $g_i = x^*_i$ for all $i\in\cN$. Since $g_i$ is decreasing in $\tau_2 > 1/\lambda(\text{diag}(\ones-\vx^*)\mB)$, we have $g_i < x^*_i$.
\end{proof}

\begin{proof}[Proof of Proposition \ref{cor:upperbound_equilibrium}]
We first quantify the upper bound of $\hat{y}_i \!+\! \hat{x}_i$, $\forall i\in\cN$. Rearranging \eqref{eqn:fi_gi} gives
\begin{equation}\label{eqn:upperbound_y_hat}
    \hat{y}_i < \min\left\{y^*_i\left(1-\hat{x}_i\right), 1-\hat{x}_i/x^*_i\right\}.
\end{equation}
Then, adding $\hat{x}_i$ on both sides in \eqref{eqn:upperbound_y_hat} gives\vspace{-1mm}
\begin{equation}\label{eqn:upperbound_y_hat_x_hat}\vspace{-1mm}
    \hat{y}_i + \hat{x}_i < \min\left\{y^*_i\left(1-\hat{x}_i\right)+\hat{x}_i, 1-\hat{x}_i/x^*_i+\hat{x}_i\right\}.
\end{equation}
Note that $\hat{y}_i\in[0,y^*_i], \hat{x}_i\in[0,x^*_i]$ and $y^*_i, x^*_i\leq 1$, ensuring that the right-hand side term in \eqref{eqn:upperbound_y_hat_x_hat} is concave in $\hat{x}_i\in[0,x^*_i]$. Then, the maximum of the upper bound in \eqref{eqn:upperbound_y_hat_x_hat} is obtained by solving $y^*_i\left(1\!-\!\hat{x}_i\right)+\hat{x}_i = 1\!-\!\hat{x}_i/x^*_i+\hat{x}_i$ in terms of $\hat{x}_i$, which gives us $\hat{x}_i = (x^*_i-y^*_ix^*_i)/(1-y^*_ix^*_i)$. Putting this expression back to \eqref{eqn:upperbound_y_hat_x_hat} leads to $\hat{y}_i + \hat{x}_i < (x^*_i + y^*_i \!-\! 2x^*_iy^*_i)/(1\!-\!x^*_iy^*_i)$,
or equivalently,\vspace{-1mm}
\begin{equation}\label{eqn:1-x-y_lowerbound}\vspace{-1mm}
    1-\hat{y}_i - \hat{x}_i > \frac{(1-x^*_i)(1-y^*_i)}{1-x^*_iy^*_i} = \frac{1}{\frac{1}{1-x^*_i}+\frac{1}{1-y^*_i}-1}.
\end{equation}

Note that $\vx^*,\vy^*$ are unrelated to each other because they are the fixed points in the single-virus SIS case where the other virus dies out. We define two independent random variables $X,Y$ that take values $x^*_i,y^*_i$ with probability $1/N$ for all $i\in\cN$. From \eqref{eqn:lower_bound_expect_form}, we have\vspace{-2mm}
\begin{equation}\label{eqn:condition_rho}\vspace{-2mm}
    \tau_1\lambda(\mA) \geq \E[1/(1-X)], ~~\tau_2\lambda(\mB) \geq \E[1/(1-Y)].
\end{equation}
We also define a random variable $Z$ that takes values $1/(1-x^*_i) + 1/(1-y^*_i)$ with probability $1/N$ for all $i\in\cN$, and $\E[Z] = \E[1/(1-X)] + \E[1/(1-Y)]$. From \eqref{eqn:condition_rho}, we have\vspace{-1mm}
\begin{equation}\label{eqn:exp_Z}\vspace{-1mm}
    \E[Z] \geq \tau_1\lambda(\mA) + \tau_2\lambda(\mB).
\end{equation}
Then, summing \eqref{eqn:1-x-y_lowerbound} over all $i\!\in\!\cN$ and dividing by $N$ gives\vspace{-1mm}
\begin{equation*}\vspace{-1mm}
    \frac{1}{N}\!\sum_{i\in\cN}(1\!-\hat{y}_i - \hat{x}_i) \!>\! \frac{1}{N}\!\sum_{i\in\cN}\!\frac{1}{\frac{1}{1\!-\!x^*_i}+\frac{1}{1\!-\!y^*_i}\!-\!1} \!=\! \E\!\left[\!\frac{1}{Z\!-\!1}\!\right]\!.
\end{equation*}
Using Jensen's inequality in the above leads to \vspace{-1mm}
\begin{equation}\label{eqn:s_bar_lowerbound}\vspace{-1mm}
    \frac{1}{N}\!\sum_{i\in\cN}(1\!-\!\hat{y}_i \!-\! \hat{x}_i) > \frac{1}{\E[Z]\!-\!1} \!\geq \!\frac{1}{\tau_1\lambda(\mA) \!+\! \tau_2\lambda(\mB) \!-\! 1},
\end{equation}
where the last inequality comes from \eqref{eqn:exp_Z}. Rearranging \eqref{eqn:s_bar_lowerbound} completes the proof.
\end{proof}

\end{document}